\definecolor{dark-red}{rgb}{0.4,0.15,0.15}
\definecolor{dark-blue}{rgb}{0,0,0.45}
\newcounter{constnum}
\def\mod{\text{ mod }} % because I hate the extra space to the left of
\def\constlabel#1{\newcounter{#1} \setcounter{#1}{\theconstnum}}
\def\const#1{%
\noindent\fbox{\parbox{\linewidth}{%
\refstepcounter{constnum}
{{\sc Construction \Roman{constnum}} \quad #1}}}}
\renewcommand{\qed}{\nobreak \ifvmode \relax \else
      \ifdim\lastskip<1.5em \hskip-\lastskip
      \hskip1.5em plus0em minus0.5em \fi \nobreak
      \vrule height0.75em width0.5em depth0.25em\fi}
\newtheorem{theorem}{Theorem}[section]
\newtheorem{lemma}[theorem]{Lemma}
\newtheorem{conjecture}[theorem]{Conjecture}
\theoremstyle{definition}
\newtheorem{example}[theorem]{Example}
\newtheorem{remark}{Remark}
\begin{document}

\centerline{{\LARGE Perfect Sequences and Arrays over the Unit Quaternions}}
\medskip
\centerline{\large Sam Blake}
\smallskip
\centerline{\large \it School of Mathematical Sciences, Monash University, Australia}
\bigskip
\bigskip
%%%
%
%    CONSTRUCTIONS FOR PERFECT SEQUENCES AND ARRAYS
%
%    PhD Thesis
%    Sam Blake, 2013
%
%%%

\def\ii{\textbf{\textit{i}}}
\def\jj{\textbf{\textit{j}}}
\def\kk{\textbf{\textit{k}}}

We introduce several new constructions for perfect periodic autocorrelation sequences and 
arrays over the unit quaternions. This paper 
uses both mathematical proofs and computer experiments to prove the (bounded) 
array constructions have perfect periodic autocorrelation. Furthermore, the first 
sequence construction generates odd-perfect sequences of unbounded lengths, 
with good ZCZ.\\

The quaternions were discovered by the Irish mathematician Sir William Rowan Hamilton 
in 1843. Hamilton was interested in the connection between complex numbers and 2-dimensional 
geometry. He tried in vain to extend the complex numbers to $\mathbb{R}^3$, only years later would it be 
discovered that there is no 3-dimensional normed division algebra. Hamilton's breakthrough
came when he extended the complex numbers to 4 dimensions. The quaternions are a 4-dimensional 
non-commutative division algebra. They can be thought of as an extension of the complex 
numbers, where $$\ii^2 = \jj^2 = \kk^2 = \ii\jj\kk = -1,$$ and they multiply as follows (row $\times$ column)
$$\begin{array}{c||c|c|c|c}
 \times  & 1 & \ii & \jj & \kk \\
\hline
\hline
 1 & 1 & \ii & \jj & \kk \\
\hline
 \ii & \ii & -1 & \kk & -\jj \\
\hline
 \jj & \jj & -\kk & -1 & \ii \\
\hline
 \kk & \kk & \jj & -\ii & -1
\end{array}.$$
Given the quaternions, $\textbf{q}_1 = a + b \ii + c \jj + d \kk$, where $a,b,c,d \in \mathbb{R}$, 
$\textbf{q}_1 = a_1 + b_1 \ii + c_1 \jj + d_1 \kk$ 
where $a_1,b_1,c_1,d_1 \in \mathbb{R}$, and  
$\textbf{q}_2 = a_2 + b_2 \ii + c_2 \jj + d_2 \kk$ where $a_2,b_2,c_2,d_2 \in \mathbb{R}$, 
we have 
\begin{align*}
\textbf{q}_1 + \textbf{q}_2 &= (a_1+a_2) + (b_1+b_2)\ii + (c_1+c_2)\jj + (d_1+d_2)\kk\\
\textbf{q}_1 \textbf{q}_2 &= (a_1 a_2-b_1 b_2-c_1 c_2-d_1 d_2) + 
(a_2 b_1+a_1 b_2-c_2 d_1+c_1 d_2) \ii + \\
& (a_2 c_1+a_1 c_2+b_2 d_1-b_1 d_2) \jj + (a_2 d_1+a_1 d_2-b_2 c_1+b_1 c_2) \kk\\
{\textbf{q}}^* &= a - b \ii - c \jj - d \kk \\
\|\textbf{q}\| &= \sqrt{\textbf{q}^*\textbf{q}} = \sqrt{a^2 + b^2 + c^2 + d^2} \\
\textbf{q}^{-1} &= \frac{\textbf{q}^*}{\|\textbf{q}\|^2}
\end{align*}
Over the quaternions, the notation 
$\frac{\textbf{q}_1}{\textbf{q}_2}$ is ambiguous, it could be left division, 
$\textbf{q}_2^{-1}\textbf{q}_1$, or right division, $\textbf{q}_1\textbf{q}_2^{-1}$. \\

If $\|\textbf{q}\| = 1$, then $\textbf{q}$ is a \textit{unit} quaternion. All the sequences 
and arrays defined in this chapter are restricted to \textit{simple unit} quaternions, which are given by 
$-1,1,-\ii,\ii,-\jj,\jj,-\kk,\kk$. \\

Moxey et al.\cite{Moxey2002} observed that as multiplication over the quaternions is 
non-commutative, the definition of autocorrelation becomes ambiguous. One could define a 
\textit{right} correlation: ${\theta^{\text{right}}}_{\textbf{s}}(\tau) = 
\sum\limits_{i=0}^{n-1}s_{i} s_{i+\tau}^*$, and a \textit{left} correlation: ${\theta^{\text{left}}}_{\textbf{s}}(\tau) = 
\sum\limits_{i=0}^{n-1}s_{i+\tau}^* s_{i}$. Thus we have two correlation measures over the
quaternions. These two correlations are not always equal. 

\begin{example}
We compute the left and right autocorrelations for the sequence, $\textbf{s} = [\ii,-\jj,-1,-1,\kk,1]$, and
show that the left and right autocorrelations of \textbf{s} are not equal for all shifts, $\tau$: 
\[
\begin{array}{c|r|r}
\tau & {\theta^{\text{left}}}_{\textbf{s}}(\tau) & {\theta^{\text{right}}}_{\textbf{s}}(\tau) \\
\hline
0 & 6 & 6\\
1 & \kk - \ii + \jj + 1 & 3\kk - \ii +\jj + 1\\
2 & \kk - \ii + 3\jj - 1 & \kk - \ii + \jj - 1\\
3 & -2 & -2\\
4 & -\kk + \ii -3\jj -1 & -\kk + \ii-\jj -1\\
5 & -\kk + \ii - \jj + 1 & -3\kk+\ii-\jj+1
\end{array}\]
\end{example}

A sequence is \textit{left perfect} if ${\theta^{\text{left}}}_{\textbf{s}}(\tau) = 0$ for all 
off-peak shifts. Similarly, for \textit{right perfect} sequences. \\

Kuznetsov\cite{Kuznetsov2010} showed that a sequence is left perfect if and only if the 
sequence is right perfect. Thus, a left or right
perfect sequence is simply referred to as a \textit{perfect sequence}. \\

Perfect quaternion sequences and arrays have a very short history. To the best of 
the authors knowledge, the first appearance in the literature of quaternion 
correlations was by Sangwine and Ell\cite{Sangwine1999} in 1999, where the Fourier
transform of colour images was computed via a quaternion Fourier transform. The autocorrelation was 
computed using the explicit (signal processing) definition, which is quadratic in complexity. \\ 

In 1999 Leukhin et al.\cite{Leukhin1999} described an application of quaternion sequences 
to optical signal processing involving photon echoes. There, a quaternion description of the 
polarisation state of the optical excitation is natural, and so is the quantum interaction within 
the material being analysed. Leukhin et al. discovered the first perfect sequence over the quaternions: 
\leftline{$\left[1+\jj,1+\jj,1+\jj,1+\jj,
        -\frac{1}{2} + \frac{\sqrt{3}\ii}{2} - \frac{\jj}{2} + \frac{\sqrt{3}\kk}{2},
        -\frac{1}{2} -\frac{\sqrt{3}\ii}{2} -\frac{\jj}{2} - \frac{\sqrt{3}\kk}{2},\right.$}
\rightline{$\left. 1+\jj, 
        -\frac{1}{2} - \frac{\sqrt{3} \ii}{2} - \frac{\jj}{2} - \frac{\sqrt{3} \kk}{2},
        -\frac{1}{2} + \frac{\sqrt{3} \ii}{2} - \frac{\jj}{2} + \frac{\sqrt{3} \kk}{2}\right]$} \\

In 2001 Pei et al.\cite{Pei2001} described an algorithm for the quaternion Fourier transform, 
convolution, and correlation by a 2D complex fast Fourier transform. \\

In 2010 Kuznetsov et al.\cite{Kuznetsov2010II} used the known method of taking the product of 
sequences of coprime lengths to construct a quaternion sequence of length
$5\times7\times9\times11\times13\times16\times17\times19\times23$. The sequences of smaller 
(odd) lengths were found by computer search, and of the form 
$[1, \jj, \textbf{s}, \textbf{s}_r, \jj, 1, \textbf{q}]$, where \textbf{s} is a simple quaternion
sequence and $q = \frac{a+b\ii+c\jj+d\kk}{2}$, where $a,b,c,d = \pm 1$. The even length sequence
is a Frank sequence\cite{Frank1962} of length 16 over 4 roots of unity. Based on
the length of the product sequence, Kuznetsov et al. conjectured
that these sequences exist for unbounded lengths. We believe this conjecture is questionable.
If the product sequences exist for unbounded lengths, then the individual sequences which form
the product sequence must exist for unbounded lengths. The longest sequence found by computer 
search is only of length 23 over an alphabet with 23 members, which far less than the square of the number of members in the sequence
alphabet, $529$ (see Mow's conjecture in Chapter 1). Stronger evidence for the conjecture could, 
for example, be the construction of sequences longer than $529$ which is not the product of 
smaller perfect sequences. \\

In 2012 Acevedo et al.\cite{Acevedo2012} discovered a link between Lee sequences\cite{Lee1992} and 
perfect quaternion sequences of lengths up to 98 which were discovered by the author in 2009\cite{SmithMiles2013}. 
As the Lee sequences exist for unbounded lengths, the corresponding sequences 
discovered by the author and subsequently generalised by Acevedo et al. are of unbounded length. While these 
sequences are over the quaternions, the single 
appearance of $\jj$ and no appearance of $-\jj$, $-\kk$, and $\kk$ suggests that for long 
sequences they are best thought of as \textit{near}-quaternion sequences, as the frequency of 
each member in the sequence 
alphabet asymptotically approaches a sequence over 4 roots of unity. \\

Following the paper of Acevedo et al.\cite{Acevedo2012}, Acevedo and Jolly\cite{Acevedo2013} 
extended the method of Arasu and de Launey\cite{Arasu2001} for constructing perfect arrays
of unbounded size over 4 roots of unity to quaternions. \\

In the following constructions we make frequent use of a well-known property of complex numbers.

\begin{lemma}\label{qnum1}
\hypertarget{quaternion1}
Let $m \in \mathbb{N}$ and $c \neq 0 \mod 4$, then $\sum\limits_{n=0}^{4m-1} \ii^{c n} = 0$.
\end{lemma}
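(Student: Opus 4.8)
The plan is to recognise the sum as an ordinary geometric series and invoke the standard closed form. Although the paper has just emphasised that quaternion multiplication is non-commutative, every term here is a power of the single element $\ii$, and the powers $\{1,\ii,\ii^2,\ii^3\}$ generate a commutative subalgebra isomorphic to the complex numbers. Consequently all the manipulations I need---reindexing, the geometric-series formula, and taking an inverse---behave exactly as they do over $\mathbb{C}$, and the non-commutativity flagged earlier never intrudes.

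Writing $r=\ii^{c}$, the sum becomes $\sum_{n=0}^{4m-1} r^{n}$. First I would observe that the hypothesis $c\not\equiv 0 \pmod 4$ guarantees $r\neq 1$: since $\ii$ has multiplicative order $4$, one has $\ii^{c}=1$ exactly when $4\mid c$. Because the quaternions form a division algebra, $r-1\neq 0$ is genuinely invertible, so the geometric-series identity $\sum_{n=0}^{N-1} r^{n}=(r^{N}-1)(r-1)^{-1}$ applies with $N=4m$. I would then evaluate the numerator: $r^{4m}=\ii^{4mc}=(\ii^{4})^{mc}=1$, whence $r^{4m}-1=0$ and the entire sum collapses to $0$.

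There is no serious obstacle here; the result is essentially the statement that the fourth roots of unity sum to zero, suitably rescaled. The only points requiring care are the two already flagged: confirming $r\neq 1$ so that the closed form is legitimate (equivalently, so that one is not inverting a singular element), and confirming $r^{4m}=1$. An alternative, entirely elementary route that avoids the inverse of $r-1$ would be to split the $4m$ terms into $m$ consecutive blocks of four and check that each block $1+\ii^{c}+\ii^{2c}+\ii^{3c}$ vanishes. Because $\ii^{4c}=1$, every block equals the first, so it suffices to treat a single block; a short case analysis on $c \bmod 4$ (the cases $c\equiv 1,2,3$) then gives $0$ in each instance. This version keeps the whole argument inside the simple-unit alphabet and may be preferable on those grounds.
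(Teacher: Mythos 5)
Your proof is correct, but your primary argument takes a genuinely different route from the paper's. The paper exploits the periodicity $\ii^{cn}=\ii^{cn \bmod 4}$ to write the sum as $m\sum_{n=0}^{3}\ii^{cn}$ and then disposes of the three residue classes $c\equiv 1,2,3 \pmod 4$ by direct computation --- which is precisely the ``alternative, entirely elementary route'' you sketch in your final paragraph. Your main argument instead sets $r=\ii^{c}$ and invokes the geometric-series identity $\sum_{n=0}^{4m-1}r^{n}=(r^{4m}-1)(r-1)^{-1}$, justified by the observations that the subalgebra generated by $\ii$ is commutative (so the identity holds verbatim as over $\mathbb{C}$), that $r\neq 1$ because $\ii$ has order $4$ and $4\nmid c$, and that $r^{4m}=1$. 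Both arguments are sound. Yours is slightly more general and avoids case analysis entirely --- it would apply unchanged to any element of finite multiplicative order in a division ring --- at the cost of needing the two justifications you correctly flag (invertibility of $r-1$ and commutativity of the relevant subalgebra). The paper's version is more pedestrian but keeps every intermediate quantity visibly inside the simple-unit alphabet, which matches the computational spirit of the rest of the paper, where this lemma is applied by reducing autocorrelation sums to exactly such period-$4$ blocks.
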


\begin{proof}
As $\ii^{cn} = \ii^{cn \mod 4}$, the terms in the summation have a period of 4.  Thus
$$\sum\limits_{n=0}^{4m-1} \ii^{cn} = m \sum\limits_{n=0}^{3} \ii^{cn} = \left\{
\begin{array}{ccc}
m\left(\ii^0 + \ii^1 + \ii^2 + \ii^3\right) = m(1 + \ii - 1 - \ii) = 0 & \text{if} & c = 1 \mod 4\\
m\left(\ii^0 + \ii^2 + \ii^0 + \ii^2\right) = m(1 - 1 + 1 - 1) = 0 & \text{if} & c = 2 \mod 4\\
m\left(\ii^0 + \ii^3 + \ii^2 + \ii^1\right) = m(1 - \ii - 1 + \ii) = 0 & \text{if} & c = 3 \mod 4
\end{array}
\right.$$
\end{proof}

Of course, Lemma \ref{qnum1} holds if $\ii$ is replaced with $\jj$, or $\kk$. This
summation is the quaternion equivalent of the Gaussian sums for roots of unity for perfect sequences. That is, 
we reduce autocorrelation summations to terms containing this summation in order to show they are perfect. 

%
%
%    COMPUTER-GENERATED QUATERNION SEQUENCES
%
%

\section{Sequences of lengths 10, 14, 18, 26, 30, 38, 42, 50, 54, 62, 74, 82, 90, and 98}

The work presented in this section predates the work of Acevedo et al.\cite{Acevedo2012} (See
\cite{SmithMiles2013}). \\

In 2008 we used the Monash Campus Cluster\cite{MonashCG} to perform exhaustive searches for
perfect periodic autocorrelation sequences over $n$-tuples of 
$[-\ii,\ii,-\jj,\jj,-\kk,\kk]$. As cancellation occurs pairwise, perfect sequences over this 
alphabet only exist for even lengths. The exhaustive searches for small lengths produced
many sequences, for example

\[
\begin{array}{r|l}
\text{length} & \text{example sequence} \\
\hline
4 &  [-\kk, \ii, -\kk, -\ii]\\
6 & [\jj, \kk, -\jj, \kk, \jj, -\ii]\\
8 & [\kk, \ii, \kk, -\jj, -\kk, \ii, -\kk, -\jj]\\
10 & [-\ii, \kk, \ii, -\kk, \ii, -\jj, \ii, -\kk, \ii, \kk] \\
12 & -\\
14 & [-\ii, -\jj, \ii, \jj, \ii, \jj, \kk, \jj, \ii, \jj, \ii, -\jj, -\ii, \jj]\\
16 & [-\ii, -\ii, -\ii, -\jj, \ii, \kk, -\kk, -\jj, -\ii, \ii, \ii, -\jj, \ii, -\kk, \kk, -\jj]\\
\end{array}\]

After looking through hundreds of small length sequences a symmetry was observed; many of the perfect
sequences have the structure $$\textbf{q} = \left[-\ii, \textbf{s}, \kk, \textbf{s}_r \right],$$ where 
$\textbf{s} = [\alpha_0, \alpha_1, \cdots, \alpha_{n-1}]\times [\jj,\ii,\jj,\ii,\cdots, \jj, \ii]$, 
$\alpha_i = +1 \text{ or } -1$, $n$ is odd, and $\textbf{s}_r$ is the reverse of $\textbf{s}$. We refer
to \textbf{s} as the sub-sequence of \textbf{q}. Using this symmetry we can cut the search space for a 
sequence of length $n$ from $6^n$ to $2^{(n-2)/2}$, and thus we can find significantly longer sequences.
\[
\begin{array}{r|l}
14 & [-\ii,-\jj,-\ii,-\jj,\ii,\jj,-\ii,\kk,-\ii,\jj,\ii,-\jj,-\ii,-\jj]\\
18 & [-\ii, -\jj, -\ii, \jj, \ii, \jj, -\ii, \jj, \ii, \kk, \ii, \jj, -\ii, \jj, \ii, \jj, -\ii, -\jj]\\
26 & [-\ii, -\jj, -\ii, \jj, \ii, -\jj, \ii, \jj, \ii, \jj, -\ii, \jj, \ii, \kk, \ii, \jj, -\ii, \jj, \ii, \jj, \ii, -\jj, \ii, \jj, -\ii, -\jj]\\
30 & [-\ii, -\jj, -\ii, -\jj, \ii, \jj, \ii, \jj, -\ii, -\jj, \ii, \jj, \ii, -\jj, \ii, \kk, \ii, -\jj, \ii, \jj, \ii, -\jj, -\ii, \jj, \ii, \\
& \jj, \ii, -\jj, -\ii, -\jj]\\
38 & [-\ii, -\jj, -\ii, -\jj, -\ii, -\jj, \ii, \jj, \ii, -\jj, \ii, -\jj, -\ii, \jj, \ii, \jj, -\ii, -\jj, \ii, \kk, \ii, -\jj, -\ii, \jj, \ii, \jj, -\ii, \\
& -\jj, \ii, -\jj, \ii, \jj, \ii, -\jj, -\ii, -\jj, -\ii, -\jj]\\
42 & [-\ii, -\jj, -\ii, -\jj, -\ii, \jj, -\ii, \jj, \ii, \jj, -\ii, -\jj, \ii, -\jj, -\ii, \jj, \ii, -\jj, \ii, -\jj, -\ii, \kk, -\ii, -\jj, \ii, -\jj, \ii,\\ 
& \jj, -\ii, -\jj, \ii, -\jj, -\ii, \jj, \ii, \jj, -\ii, \jj, -\ii, -\jj, -\ii, -\jj]\\
50 & [-\ii, -\jj, -\ii, -\jj, -\ii, \jj, \ii, \jj, \ii, -\jj, \ii, -\jj, \ii, \jj, -\ii, -\jj, \ii, \jj, \ii, -\jj, \ii, \jj, -\ii, \jj, \ii, \kk, \ii, \jj,\\
& -\ii, \jj, \ii, -\jj, \ii, \jj, \ii, -\jj, -\ii, \jj, \ii, -\jj, \ii, -\jj, \ii, \jj, \ii, \jj, -\ii, -\jj, -\ii, -\jj]\\
54 & [-\ii, -\jj, -\ii, -\jj, -\ii, -\jj, \ii, \jj, -\ii, \jj, -\ii, -\jj, -\ii, -\jj, \ii, \jj, \ii, -\jj, \ii, \jj, -\ii, \jj, \ii, -\jj, -\ii, \jj, -\ii, \\
& \kk, -\ii, \jj, -\ii, -\jj, \ii, \jj, -\ii, \jj, \ii, -\jj, \ii, \jj, \ii, -\jj, -\ii, -\jj, -\ii, \jj, -\ii, \jj, \ii, -\jj, -\ii, -\jj, -\ii, -\jj]\\
62 & [-\ii, -\jj, -\ii, -\jj, \ii, \jj, \ii, -\jj, \ii, -\jj, -\ii, -\jj, \ii, -\jj, -\ii, \jj, -\ii, \jj, -\ii, -\jj, \ii, \jj, \ii, \jj, -\ii, -\jj, \\
& \ii, \jj, \ii, -\jj, \ii, \kk, \ii, -\jj, \ii, \jj, \ii, -\jj, -\ii, \jj, \ii, \jj, \ii, -\jj, -\ii, \jj, -\ii, \jj, -\ii, -\jj, \ii, -\jj, -\ii, -\jj,\\
&  \ii, -\jj, \ii, \jj, \ii, -\jj, -\ii, -\jj]\\
74 & [-\ii, -\jj, \ii, -\jj, \ii, -\jj, -\ii, -\jj, -\ii, -\jj, -\ii, -\jj, \ii, \jj, -\ii, \jj, \ii, \jj, -\ii, -\jj, \ii, \jj, \ii, -\jj, \ii, -\jj, \ii,\\
&  \jj, -\ii, \jj, -\ii, -\jj, \ii, -\jj, -\ii, \jj, \ii, \kk, \ii, \jj, -\ii, -\jj, \ii, -\jj, -\ii, \jj, -\ii, \jj, \ii, -\jj, \ii, -\jj, \ii, \jj, \ii, -\jj,\\
&  -\ii, \jj, \ii, \jj, -\ii, \jj, \ii, -\jj, -\ii, -\jj, -\ii, -\jj, -\ii, -\jj, \ii, -\jj, \ii, -\jj]\\
82 & [-\ii,-\jj,-\ii,\jj,\ii,-\jj,-\ii,-\jj,-\ii,-\jj,-\ii,-\jj,\ii,\jj,-\ii,\jj,\ii,-\jj,-\ii,-\jj,\ii,\jj,\ii,-\jj,\\
& -\ii,\jj,\ii, \jj,-\ii,\jj,-\ii,\jj,-\ii,\jj,\ii,\jj,\ii,-\jj,-\ii,-\jj,-\ii,\kk,-\ii,-\jj,-\ii,-\jj,\ii,\jj,\ii,\jj,\\
& -\ii,\jj,-\ii,\jj,-\ii,\jj, \ii,\jj,-\ii,-\jj,\ii,\jj,\ii,-\jj,-\ii,-\jj,\ii,\jj,-\ii,\jj,\ii,-\jj,-\ii,-\jj,-\ii,-\jj,\\
& -\ii,-\jj,\ii,\jj,-\ii,-\jj]\\
90 & [-\ii, -\jj, -\ii, -\jj, \ii, -\jj, \ii, -\jj, -\ii, -\jj, \ii, -\jj, -\ii, \jj, -\ii, -\jj, \ii, \jj, -\ii, \jj, \ii, \jj, -\ii, -\jj, \ii, \jj, \ii, \\
& -\jj, \ii, -\jj, -\ii, \jj, -\ii, \jj, -\ii, -\jj, -\ii, -\jj, \ii, -\jj, -\ii, \jj, -\ii, \jj, \ii, \kk, \ii, \jj, -\ii, \jj, -\ii, -\jj, \ii, -\jj, -\ii, \\
& -\jj, -\ii, \jj, -\ii, \jj, -\ii, -\jj, \ii, -\jj, \ii, \jj, \ii, -\jj, -\ii, \jj, \ii, \jj, -\ii, \jj, \ii, -\jj, -\ii, \jj, -\ii, -\jj, \ii, -\jj, -\ii, \\
& -\jj, \ii, -\jj, \ii, -\jj, -\ii, -\jj]\\
98 & [-\ii, -\jj, -\ii, -\jj, -\ii, -\jj, -\ii, \jj, -\ii, \jj, \ii, \jj, -\ii, -\jj, \ii, -\jj, -\ii, \jj, \ii, \jj, \ii, -\jj, \ii, -\jj, -\ii, -\jj, -\ii, \jj,\\
&  \ii, -\jj, -\ii, \jj, \ii, \jj, \ii, \jj, -\ii, -\jj, -\ii, \jj, -\ii, \jj, -\ii, -\jj, \ii, \jj, -\ii, \jj, \ii, \kk, \ii, \jj, -\ii, \jj, \ii, -\jj, -\ii, \jj,\\
&  -\ii, \jj, -\ii, -\jj, -\ii, \jj, \ii, \jj, \ii, \jj, -\ii, -\jj, \ii, \jj, -\ii, -\jj, -\ii, -\jj, \ii, -\jj, \ii, \jj, \ii, \jj, -\ii, -\jj, \ii, -\jj, -\ii, \\
& \jj, \ii, \jj, -\ii, \jj, -\ii, -\jj, -\ii, -\jj, -\ii, -\jj]\\
\end{array}
\]

The $\textbf{q}$ sequences have just a single \kk, so for long lengths the frequency of each member of the 
alphabet asymptotically approaches a sequence over 4 roots of unity. With detailed knowledge of this approach 
to constructing long perfect quaternion sequences, Acevedo and Hall extended this result by noticing a connection 
to Lee sequences\cite{Lee1992}. Acevedo and Hall chose not to reference this work\cite{Acevedo2012}.

%
%
%    A QUATERNION SEQUENCE WITH THE AOP
%
%

\section{Quaternion sequences with the AOP}

Over roots of unity, a number of perfect sequence constructions exist which possess 
the AOP. These include the constructions of 
Heimiller-Frank\cite{Heimiller1961}\cite{Frank1962}, Milewski\cite{Milewski1983}, the generalised sequence construction 
of Mow\cite{Mow1996}, and a construction by the author\cite{Blake2013b}. 
The existence of perfect quaternion sequences with the AOP has not been previously considered. \\

We considered a construction of the form 
$$\textbf{S} = [S_{a,b}] = \ii^{\left\lfloor\frac{f(a,b)}{c}\right\rfloor}
\jj^{\left\lfloor\frac{g(a,b)}{d}\right\rfloor},$$ where $f(a,b)$ and $g(a,b)$ are bivariate 
polynomials with integer coefficients and $c$, $d$ are positive integers. The sequence is formed 
by enumerating row-by-row the array \textbf{S}. The size of the arrays considered were 
$2 \leq a, b \leq 32$, such that $a b > 16$. The integer coefficients of the polynomials and the 
denominators were less than 13. For each array size, $10\,000$ randomly selected polynomials and 
denominators were checked for the AOP. \\

The results of the search were interesting - one sequence of length 64 was found with the 
AOP. It was constructed from an array of size $8 \times 8$ with the index function 
$$\textbf{S} = [S_{a,b}] = \ii^{a\,b}\jj^{\left\lfloor\frac{a\,b}{2}\right\rfloor}.$$
The array and sequence is given by
$$
\left[
\begin{array}{cccccccc}
 1 & 1 & 1 & 1 & 1 & 1 & 1 & 1 \\
 1 & \ii & -\jj & -\kk  & -1 & -\ii & \jj & \kk  \\
 1 & -\jj & -1 & \jj & 1 & -\jj & -1 & \jj \\
 1 & -\kk  & \jj & \ii & -1 & \kk  & -\jj & -\ii \\
 1 & -1 & 1 & -1 & 1 & -1 & 1 & -1 \\
 1 & -\ii & -\jj & \kk  & -1 & \ii & \jj & -\kk  \\
 1 & \jj & -1 & -\jj & 1 & \jj & -1 & -\jj \\
 1 & \kk  & \jj & -\ii & -1 & -\kk  & -\jj & \ii \\
\end{array}
\right]
$$
and 
$$[1,1,1,1,1,1,1,1,1,\ii,-\jj,-\kk ,-1,-\ii,\jj,\kk ,1,-\jj,-1,\jj,1,-\jj,-1,$$
$$\jj,1,-\kk ,\jj,\ii,-1,\kk,-\jj,-\ii,1,-1,1,-1,1,-1,1,-1,1,-\ii,$$
$$-\jj,\kk ,-1,\ii,\jj,-\kk ,1,\jj,-1,-\jj,1,\jj,-1,-\jj,1,\kk,\jj,-\ii,-1,-\kk,-\jj,\ii].$$

The existence of this sequence is of interest, furthermore the fact that its length is the
square of the number of distinct elements in the sequence draws parallels to the 
Heimiller-Frank construction. Based on the search not finding longer 
sequences with the AOP we make the following conjecture. 

\begin{conjecture}\label{quaternion_AOP_bound_conjecture}
The longest perfect sequence over the simple unit quaternions with the AOP is of length 64.
\end{conjecture}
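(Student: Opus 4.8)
\noindent\emph{Proof strategy.} The statement has two halves: that a perfect sequence with the AOP of length $64$ exists, and that none of greater length exists. The first half is already settled by the explicit $8\times 8$ array $S_{a,b}=\ii^{a\,b}\jj^{\lfloor a\,b/2\rfloor}$ exhibited above, so the entire content of the conjecture lies in the upper bound. A direct exhaustive search cannot establish this, since the number of candidate sequences of length $L$ grows like $8^{L}$; some structural input is essential.

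The plan is to exploit the structure forced by the AOP. First I would observe that the eight simple unit quaternions are exactly the quaternion group $Q_8=\{\pm 1,\pm\ii,\pm\jj,\pm\kk\}$, and that the index map $f\colon\mathbb{Z}_8\to Q_8$ given by $f(m)=\ii^{m}\jj^{\lfloor m/2\rfloor}$ is a bijection cycling through all eight elements (indeed $f(0),\dots,f(7)=1,\ii,-\jj,-\kk,-1,-\ii,\jj,\kk$). Thus the extremal sequence is a genuine quaternionic analogue of the Heimiller--Frank sequence $\omega^{a\,b}$, with $f$ playing the role of the character $m\mapsto\omega^{m}$ on the cyclic group $\mu_N$. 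I would then try to prove that any perfect sequence possessing the AOP must fold into a generalised-chirp array with at most $|Q_8|=8$ distinct phase levels along each axis, whose base row consists of distinct alphabet symbols. Because the alphabet has only eight symbols this yields $L\le 8^2=64$. This is the quaternionic shadow of Mow's conjecture (Chapter~1), namely that a perfect sequence over an alphabet of size $H$ has length at most $H^2$; here $H=8$.

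The autocorrelation analysis would proceed by replacing the geometric-sum cancellation of Lemma~\ref{qnum1} with non-commutative harmonic analysis on $Q_8$. The group has five irreducible representations, four one-dimensional characters and a single two-dimensional representation, and both the left and right correlations (equal by Kuznetsov's equivalence) must be expanded against this basis. On the four linear characters the computation essentially reduces to the classical root-of-unity case and reproduces Lemma~\ref{qnum1}; the difficulty is concentrated entirely in the two-dimensional representation, where multiplication no longer commutes and the vanishing of the correlation sums is not automatic.

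The main obstacle, and the reason the statement is posed as a conjecture rather than a theorem, is precisely this last step: controlling the contribution of the two-dimensional irrep for sequences of unbounded length \emph{without} assuming the chirp structure from the outset. Ruling out exotic, non-chirp perfect sequences over $Q_8$ appears to require a genuine quaternionic analogue of Mow's bound, which is open even in the commutative roots-of-unity setting. I would therefore expect a complete proof to hinge on first settling this representation-theoretic cancellation question, after which the bound $L\le 64$ would follow from the structural reduction above together with the explicit construction supplying the matching lower bound.
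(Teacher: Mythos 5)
The statement you are asked to prove is a \emph{conjecture}, and the paper offers no proof of it whatsoever: its only evidence is a randomized search over a restricted two-parameter family of index functions $\ii^{\lfloor f(a,b)/c\rfloor}\jj^{\lfloor g(a,b)/d\rfloor}$ with bounded coefficients ($10\,000$ random trials per array size), which is far from exhaustive even over that family, let alone over all sequences of a given length. Your proposal correctly recognizes this and is candid that it is a research program rather than a proof. Within that framing, your concrete observations are sound and actually sharper than anything in the paper: the existence half is indeed settled by the explicit $8\times 8$ example, and your check that $m\mapsto\ii^{m}\jj^{\lfloor m/2\rfloor}$ enumerates all eight simple unit quaternions (giving $1,\ii,-\jj,-\kk,-1,-\ii,\jj,\kk$) makes the Heimiller--Frank analogy precise, which the paper only gestures at.

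The genuine gap is the structural reduction you assert in passing: that \emph{any} perfect sequence with the AOP over the simple unit quaternions ``must fold into a generalised-chirp array with at most $8$ distinct phase levels along each axis, whose base row consists of distinct alphabet symbols.'' No argument is offered for this, and it is not a routine step --- it is essentially the entire content of the conjecture, and (as you yourself note two paragraphs later) it is at least as strong as a quaternionic analogue of Mow's bound, which is open even over roots of unity. The representation-theoretic expansion over the irreducibles of $Q_8$ is a reasonable place to look for the needed cancellation, but as written the proposal assumes the conclusion in the reduction step and then defers the hard part. So the proposal does not prove the statement; it leaves open exactly the same question the paper does, though it articulates \emph{why} it is hard considerably better than the paper's one-line appeal to a limited computer search.
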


%
%
%    QUATERNION SEQUENCE CONSTRUCTION 1
%
%

\section{Sequences of length $2^n$, for $0 < n < 7$}

\const{\constlabel{quatseq1}\hypertarget{quatseq1}
Let $0<n<6$, we construct a sequence, $\textbf{s}=[s_a]$, of length $2^n$, where 
$$s_a = \ii^{\left\lfloor\frac{a^2}{2^{n-1}}\right\rfloor}
        \jj^{\left\lfloor\frac{2a^2}{2^{n-1}}\right\rfloor},$$ for 
$0 \leq a < 2^n$.}

\bigskip

\begin{theorem}
Let \textbf{s} be the sequence from Construction
\Roman{quatseq1}. If $\tau$ is odd, then $\theta_{\textbf{s}}(\tau) = 0$.
\end{theorem}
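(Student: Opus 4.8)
The plan is to exploit a half-period symmetry of the sequence rather than to reduce directly to the Gaussian-type sum of Lemma \ref{qnum1}. Write $m = 2^{n-1}$, so that the period is $N = 2^n = 2m$, and set $p_a = \lfloor a^2/m\rfloor$ and $q_a = \lfloor 2a^2/m\rfloor$, so that $s_a = \ii^{p_a}\jj^{q_a}$. First I would check that $\mathbf{s}$ really is periodic with period $N$, so that the periodic autocorrelation is well defined: from $(a+2m)^2 = a^2 + m(4a+4m)$ one gets $p_{a+2m} = p_a + 4a+4m$ and $q_{a+2m} = q_a + 8a+8m$, and since $\ii^4 = \jj^4 = 1$ these shifts leave $s_a$ unchanged. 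Throughout I take indices modulo $N$.

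The heart of the argument is the identity
\[ s_{a+m} = (-1)^{a + m/2}\, s_a \qquad (a \in \mathbb{Z}), \]
valid whenever $m$ is even, i.e.\ $n \ge 2$. To prove it, expand $(a+m)^2 = a^2 + m(2a+m)$ to obtain $p_{a+m} = p_a + 2a + m$, and from $2(a+m)^2/m = 2a^2/m + (4a+2m)$ obtain $q_{a+m} = q_a + 4a + 2m$. Hence $s_{a+m} = \ii^{\,2a+m}\,\ii^{p_a}\,\jj^{q_a}\,\jj^{\,4a+2m}$. The crucial point --- and the one place the specific exponents matter --- is that both extra factors are \emph{real central scalars}: since $m$ is even, $\jj^{\,4a+2m} = (\jj^2)^{2a+m} = (-1)^{2a+m} = 1$, and $\ii^{\,2a+m} = (\ii^2)^{(2a+m)/2} = (-1)^{a+m/2}$. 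Being real, they commute with everything and may be pulled out, giving the claimed identity. This is what makes the half-shift translate of $s_a$ a mere $\pm 1$ multiple of $s_a$; for a general alphabet the translate would be a genuine quaternion and the argument would break.

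With the identity in hand the conclusion is immediate. Splitting the correlation over the two halves of a period,
\[ \theta_{\mathbf{s}}(\tau) = \sum_{a=0}^{m-1} s_a s_{a+\tau}^* + \sum_{a=0}^{m-1} s_{a+m}s_{a+m+\tau}^*, \]
I apply the identity to each factor of the second sum: $s_{a+m} = (-1)^{a+m/2}s_a$ and $s_{a+m+\tau}^* = (-1)^{a+\tau+m/2}s_{a+\tau}^*$ (the sign is real, so conjugation fixes it). Multiplying, the scalars combine to $(-1)^{2a+\tau+m} = (-1)^\tau$ since $m$ is even, whence $s_{a+m}s_{a+m+\tau}^* = (-1)^\tau s_a s_{a+\tau}^*$. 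Therefore $\theta_{\mathbf{s}}(\tau) = \bigl(1+(-1)^\tau\bigr)\sum_{a=0}^{m-1} s_a s_{a+\tau}^*$, which vanishes whenever $\tau$ is odd. The same computation goes through verbatim for the left correlation, since the relating factors are real and central, so the statement holds for either convention and no appeal to Lemma \ref{qnum1} is needed. The only case the identity misses is $n=1$ (where $m=1$ is odd); there $\mathbf{s}$ has length $2$ and $\theta_{\mathbf{s}}(1)=0$ is checked by inspection. I expect the one point requiring genuine care to be the verification that $\ii^{\,2a+m}$ and $\jj^{\,4a+2m}$ are real scalars, since everything else is routine floor arithmetic and bookkeeping.
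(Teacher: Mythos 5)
Your proof is correct, and it takes a genuinely different route from the one in the paper. The paper changes coordinates $a = q2^{n-1}+r$ with $q$ running over $0,\dots,3$ (double-covering the period, hence a factor of $\tfrac12$), expands the floor functions, and factors the correlation into a product whose leftmost factor $\sum_{q=0}^{3}\ii^{-2\tau q}$ vanishes for odd $\tau$ by Lemma \ref{qnum1}. You instead isolate the half-period symmetry $s_{a+m}=(-1)^{a+m/2}s_a$ with $m=2^{n-1}$, which packages the same cancellation --- the $a$-dependent signs from the two factors cancel, leaving $(-1)^{\tau}$ --- into the single scalar $1+(-1)^{\tau}$. Your key step (that $\ii^{2a+m}$ and $\jj^{4a+2m}$ are real central scalars for $m$ even) is verified correctly, and the identity checks out against the listed examples (e.g.\ for $[1,\jj,-1,\jj]$ one has $s_{a+2}=(-1)^{a+1}s_a$). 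What your version buys: no appeal to Lemma \ref{qnum1}, no floor-function expansion or factor-of-$\tfrac12$ bookkeeping, an explicit verification of the $2^n$-periodicity that the paper's change of variables silently relies on, an explicit treatment of the $n=1$ case where $m$ is odd, and the observation that the argument applies verbatim to the left correlation. What the paper's version buys: the quotient--remainder decomposition is the template reused for the array constructions later in the paper, and it leaves the residual sum over $r$ in factored form --- the object that must then be examined (by computer) to establish full perfection --- whereas your symmetry argument is tailored to the odd-shift statement and does not produce that factorisation. Both arguments are sound.
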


\begin{proof}
The autocorrelation of \textbf{s} for shift $\tau$ is given by:
\begin{align}
\theta_{\textbf{S}}(\tau) &= \sum_{a=0}^{2^n-1} s_a s_{a+\tau}^* \nonumber\\
&= \sum_{a=0}^{2^n-1} \ii^{\left\lfloor\frac{a^2}{2^{n-1}}\right\rfloor}
\jj^{\left\lfloor\frac{2a^2}{2^{n-1}}\right\rfloor}
\left( 
\ii^{\left\lfloor\frac{(a+\tau)^2}{2^{n-1}}\right\rfloor}
\jj^{\left\lfloor\frac{2(a+\tau)^2}{2^{n-1}}\right\rfloor}
\right)^*\nonumber\\
\intertext{\indent Over the quaternions, we have, $(a b)^* = b^* a^*$:}
&=\sum_{a=0}^{2^n-1} \ii^{\left\lfloor\frac{a^2}{2^{n-1}}\right\rfloor}
\jj^{\left\lfloor\frac{2a^2}{2^{n-1}}\right\rfloor}
{\jj^{\left\lfloor\frac{2(a+\tau)^2}{2^{n-1}}\right\rfloor}}^*
{\ii^{\left\lfloor\frac{(a+\tau)^2}{2^{n-1}}\right\rfloor}}^*\nonumber\\
\intertext{\indent Over the quaternions, we have, $q^*=|q|^2 q^{-1}$:}
&=\sum_{a=0}^{2^n-1} \ii^{\left\lfloor\frac{a^2}{2^{n-1}}\right\rfloor}
\jj^{\left\lfloor\frac{2a^2}{2^{n-1}}\right\rfloor}
\jj^{-\left\lfloor\frac{2(a+\tau)^2}{2^{n-1}}\right\rfloor}
\ii^{-\left\lfloor\frac{(a+\tau)^2}{2^{n-1}}\right\rfloor}\nonumber\\
\intertext{\indent Change coordinates by letting $a=q2^{n-1}+r$:}
&=\frac{1}{2}\sum_{q=0}^3\sum_{r=0}^{2^{n-1}-1}
\ii^{\left\lfloor\frac{q^22^{2n-2}+2qr2^{n-1}+r^2}{2^{n-1}}\right\rfloor}
\jj^{\left\lfloor\frac{2q^22^{2n-2}+4qr2^{n-1}+2r^2}{2^{n-1}}\right\rfloor}
\times\nonumber\\
&\jj^{-\left\lfloor\frac{2q^22^{2n-2}+4qr2^{n-1}+2r^2
+4\tau q2^{n-1}+4\tau r + 2 \tau^2}{2^{n-1}}\right\rfloor}
\ii^{-\left\lfloor\frac{q^22^{2n-2}+2qr2^{n-1}+r^2
+2\tau q2^{n-1}+2\tau r + \tau^2}{2^{n-1}}\right\rfloor}\nonumber\\
&=\frac{1}{2}\sum_{q=0}^3\sum_{r=0}^{2^{n-1}-1}
\ii^{2^{n-1}q^2+2qr+\left\lfloor\frac{r^2}{2^{n-1}}\right\rfloor}
\jj^{2^{n}q^2 + 4qr + \left\lfloor\frac{2r^2}{2^{n-1}}\right\rfloor}
\times\nonumber\\
&\jj^{-2^{n}q^2 - 4qr - 4\tau q - \left\lfloor\frac{2r^2
+4\tau r + 2 \tau^2}{2^{n-1}}\right\rfloor}
\ii^{-2^{n-1}q^2 - 2qr - 2 \tau q - \left\lfloor\frac{r^2
+2\tau r + \tau^2}{2^{n-1}}\right\rfloor}\nonumber\\
&=\frac{1}{2}\sum_{q=0}^3\sum_{r=0}^{2^{n-1}-1}
\ii^{-2\tau q + \left\lfloor\frac{r^2}{2^{n-1}}\right\rfloor - 
\left\lfloor\frac{(r+\tau)^2}{2^{n-1}}\right\rfloor}
\jj^{-4\tau q + \left\lfloor\frac{2r^2}{2^{n-1}}\right\rfloor -
\left\lfloor\frac{2(r+\tau)^2}{2^{n-1}}\right\rfloor}\nonumber\\
&=\frac{1}{2}\left(\sum_{q=0}^3
\ii^{-2 \tau q}\right)
\left(\sum_{r=0}^{2^{n-1}-1}
\ii^{\left\lfloor\frac{r^2}{2^{n-1}}\right\rfloor - 
\left\lfloor\frac{(r+\tau)^2}{2^{n-1}}\right\rfloor}
\jj^{\left\lfloor\frac{2r^2}{2^{n-1}}\right\rfloor -
\left\lfloor\frac{2(r+\tau)^2}{2^{n-1}}\right\rfloor}\right)\label{quatseq1}
\end{align}
The leftmost summation in (\ref{quatseq1}) is zero for $\tau$ odd. 
\end{proof}

\begin{remark}
For $0<n<7$, we have confirmed by computer program that 
Construction \Roman{quatseq1} generates 
perfect sequences.
\end{remark}

\begin{example}
We construct the 6 perfect sequences generated by Construction \Roman{quatseq1}. The sequences of length 2, 4, and 8 are Zadoff-Chu and Milewski sequences.\\

\noindent length 2: \quad $[1, -\ii]$\\
\noindent length 4: \quad $[1, \jj, -1, \jj]$\\
\noindent length 8: \quad $[1,1,-\ii,-1,1,-1,-\ii,1]$\\
\noindent length 16:\quad $[1, 1, \jj, -\ii, -1, \ii, \jj, -1, 1, -1, \jj, \ii, -1, -\ii, \jj, 1]$\\
\noindent length 32:\quad $[1, 1, 1, \jj, -\ii, -\kk, -1, \ii, 1, -\ii, -1, \kk, -\ii, -\jj, 1, $\\
\hspace*{1in} $-1, 1, -1, 1, -\jj, -\ii, \kk, -1, -\ii, 1, \ii, -1, -\kk, -\ii, \jj, 1, 1]$\\
\noindent length 64:\quad $[1, 1, 1, 1, \jj, \jj, -\ii, -\kk, -1, -\jj, \ii, \kk, \jj, -\ii, -1, \ii, 1, -\ii, -1, \ii, \jj, -\kk, \ii, \jj, -1, \kk, -\ii, -\jj,$\\ 
\hspace*{1in} $\jj, -1, 1, -1, 1, -1, 1, -1, \jj, -\jj, -\ii, \kk, -1, \jj, \ii, -\kk, \jj, \ii, -1, -\ii, 1, \ii, -1, -\ii, \jj, \kk, \ii,$\\ 
\hspace*{1in} $-\jj, -1, -\kk, -\ii, \jj, \jj, 1, 1, 1]$
\end{example}

Extending Construction \Roman{quatseq1} to $n \geq 7$ generates good ZCZ sequences. 

\begin{example}
Let $n = 7$, then Construction \Roman{quatseq1} generates the following
sequence: 

$[1, 1, 1, 1, 1, 1, \jj, \jj, -\ii, -\ii, -\kk, -\kk, -1, -\jj, \ii, \kk, 1, \jj, -\ii, -\kk, -1, -\jj, \kk, 1, -\ii, -\kk, -\jj, \ii,$\\ 
\hspace*{0.3in} $1, -\ii, -1, \ii, 1, -\ii, -1, \ii, 1, -\ii, -\jj, \kk, -\ii, -1, \kk, \jj, -1, \kk, -\ii, -\jj, 1, -\kk, \ii, \jj, -1, \kk, -\kk, $\\
\hspace*{0.3in} $\ii, -\ii, -\jj, \jj, -1, 1, -1, 1, -1, 1, -1, 1, -1, 1, -1, \jj, -\jj, -\ii, \ii, -\kk, \kk, -1, \jj, \ii, -\kk, 1, -\jj, -\ii, \kk,$\\ 
\hspace*{0.3in} $-1, \jj, \kk, -1, -\ii, \kk, -\jj, -\ii, 1, \ii, -1, -\ii, 1, \ii, -1, -\ii, 1, \ii, -\jj, -\kk, -\ii, 1, \kk, -\jj, -1, -\kk, -\ii,$\\ 
\hspace*{0.3in} $\jj, 1, \kk, \ii, -\jj, -1, -\kk, -\kk, -\ii, -\ii, \jj, \jj, 1, 1, 1, 1, 1],$\\

which has the following (left and right) autocorrelations:\\

\leftline{128,0,0,0,0,0,0,0,16,0,0,0,0,0,0,0,0,0,0,0,0,0,0,0,16,0,0,0,0,0,0,0,0,0,0,0,0,0,0,0,-16,}
\centerline{0,0,0,0,0,0,0,0,0,0,0,0,0,0,0,-16,0,0,0,0,0,0,0,0,0,0,0,0,0,0,0,-16,0,0,0,0,0,0,0,0,0,0,}
\rightline{0,0,0,0,0,-16,0,0,0,0,0,0,0,0,0,0,0,0,0,0,0,16,0,0,0,0,0,0,0,0,0,0,0,0,0,0,0,16,0,0,0,0,0,0,0.}
\end{example}

%
%
%    QUATERNION ARRAY CONSTRUCTION 1
%
%

\section{Arrays of size $2^n\times 2^n$, for $1<n<7$}

\const{\constlabel{quaternion2D1}\hypertarget{quaternion2D1}
Let $1<n<7$, we construct a 2-dimensional array, $\textbf{S} = 
\left[S_{a,b}\right]$ of size $2^n \times 2^n$ 
over the unit quaternions: $\{-1,1, -\ii, \ii, -\jj, \jj, -\kk, \kk\}$, 
where 
$$S_{a,b} = \ii^{\left\lfloor\frac{4 a b}{2^n}\right\rfloor}
\jj^{\left\lfloor\frac{4 a^2 b^2}{2^n}\right\rfloor}.$$}

\medskip

\begin{remark} 
For $1<n<7$, we have confirmed by computer program that 
Construction \Roman{quaternion2D1} generates 
perfect arrays.
\end{remark}

\begin{example}
Let $n=4$ as in Construction \Roman{quaternion2D1},
then we generate a $16\times 16$ perfect array: 
$$\left[
\begin{array}{cccccccccccccccc}
 1 & 1 & 1 & 1 & 1 & 1 & 1 & 1 & 1 & 1 & 1 & 1 & 1 & 1 & 1 & 1 \\
 1 & 1 & \jj & -1 & \ii & -\ii & \kk &
   \ii & -1 & -1 & -\jj & 1 & -\ii & \ii &
   -\kk & -\ii \\
 1 & \jj & \ii & \kk & -1 & -\jj & -\ii &
   -\kk & 1 & \jj & \ii & \kk & -1 & -\jj
   & -\ii & -\kk \\
 1 & -1 & \kk & -1 & -\ii & -\ii & \jj &
   -\ii & -1 & 1 & -\kk & 1 & \ii & \ii &
   -\jj & \ii \\
 1 & \ii & -1 & -\ii & 1 & \ii & -1 & -\ii & 1
   & \ii & -1 & -\ii & 1 & \ii & -1 & -\ii \\
 1 & -\ii & -\jj & -\ii & \ii & -1 & -\kk
   & -1 & -1 & \ii & \jj & \ii & -\ii & 1 &
   \kk & 1 \\
 1 & \kk & -\ii & \jj & -1 & -\kk & \ii &
   -\jj & 1 & \kk & -\ii & \jj & -1 &
   -\kk & \ii & -\jj \\
 1 & \ii & -\kk & -\ii & -\ii & -1 & -\jj
   & 1 & -1 & -\ii & \kk & \ii & \ii & 1 &
   \jj & -1 \\
 1 & -1 & 1 & -1 & 1 & -1 & 1 & -1 & 1 & -1 & 1 & -1 & 1 & -1 & 1 & -1
   \\
 1 & -1 & \jj & 1 & \ii & \ii & \kk &
   -\ii & -1 & 1 & -\jj & -1 & -\ii & -\ii &
   -\kk & \ii \\
 1 & -\jj & \ii & -\kk & -1 & \jj & -\ii
   & \kk & 1 & -\jj & \ii & -\kk & -1 &
   \jj & -\ii & \kk \\
 1 & 1 & \kk & 1 & -\ii & \ii & \jj & \ii
   & -1 & -1 & -\kk & -1 & \ii & -\ii & -\jj &
   -\ii \\
 1 & -\ii & -1 & \ii & 1 & -\ii & -1 & \ii & 1
   & -\ii & -1 & \ii & 1 & -\ii & -1 & \ii \\
 1 & \ii & -\jj & \ii & \ii & 1 & -\kk &
   1 & -1 & -\ii & \jj & -\ii & -\ii & -1 &
   \kk & -1 \\
 1 & -\kk & -\ii & -\jj & -1 & \kk & \ii
   & \jj & 1 & -\kk & -\ii & -\jj & -1 &
   \kk & \ii & \jj \\
 1 & -\ii & -\kk & \ii & -\ii & 1 & -\jj
   & -1 & -1 & \ii & \kk & -\ii & \ii & -1 &
   \jj & 1
\end{array}
\right]$$
\end{example}

%
%
%    QUATERNION ARRAY CONSTRUCTION 3
%
%

\section{Arrays of size $2^{n+1}\times 2^{n+1} \times 2^{n+1} \times 2^{n+1}$, for $0<n<6$}

We now state a construction for $4$-dimensional arrays over the unit quaternions. \\

\const{\constlabel{quaternionC1}\hypertarget{quaternion1}
Let $0<n<6$, we construct a 4-dimensional array, $\textbf{S} = \left[S_{a,b,c,d} \right]$ of size $2^{n+1}\times 2^{n+1} \times 2^{n+1} \times 2^{n+1}$
over the unit quaternions: $\{-1,1, -\ii, \ii, -\jj, \jj, -\kk, \kk\}$, where 
$$S_{a,b,c,d} = \ii^{\left\lfloor\frac{a b}{2^{n-1}}\right\rfloor} \jj^{\left\lfloor\frac{b c}{2^{n-1}}\right\rfloor} \kk^{\left\lfloor\frac{c d}{2^{n-1}}\right\rfloor},$$ for $0 \leq a,b,c,d < 2^{n+1}$.}

\bigskip

\begin{theorem}\label{quaternionthm1}\hypertarget{quaternionthm1}
Let \textbf{S} be the array from Construction \Roman{quaternionC1}. If
$s_0 \neq 0 \mod 4$ or $s_3 \neq 0 \mod 4$, then
$\theta_{\textbf{S}}(s_0,s_1,s_2,s_3) = 0$.
\end{theorem}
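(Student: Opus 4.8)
The plan is to follow the template of the one-dimensional argument. First I would expand the autocorrelation, apply $(\mathbf{q}_1\mathbf{q}_2)^* = \mathbf{q}_2^*\mathbf{q}_1^*$, and use that for a unit quaternion $\mathbf{q}^* = \mathbf{q}^{-1}$ (so conjugation merely negates each exponent). Writing $D = 2^{n-1}$, $P=\lfloor ab/D\rfloor$, $Q=\lfloor bc/D\rfloor$, $R=\lfloor cd/D\rfloor$ and $P',Q',R'$ for the shifted versions, this brings the summand to $\ii^{P}\jj^{Q}\kk^{R}\kk^{-R'}\jj^{-Q'}\ii^{-P'} = \ii^{P}\jj^{Q}\kk^{R-R'}\jj^{-Q'}\ii^{-P'}$, in which the two $\kk$-powers are adjacent. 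I would then record the symmetry of the construction: the anti-automorphism $\ii\mapsto\kk,\ \jj\mapsto\jj,\ \kk\mapsto\ii$ composed with the index reversal $(a,b,c,d)\mapsto(d,c,b,a)$ fixes $S_{a,b,c,d}$ and sends the shift $(s_0,s_1,s_2,s_3)$ to $(s_3,s_2,s_1,s_0)$. This identifies the autocorrelation at $(s_0,s_1,s_2,s_3)$ with (the conjugate of) the one at $(s_3,s_2,s_1,s_0)$, and so lets me obtain the $s_3\not\equiv0$ case for free once the $s_0\not\equiv0$ case is settled.

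Next I would clear away the easy shifts. The variable $d$ occurs only in the adjacent central factor $\kk^{R-R'}$, so the substitution $d = q\,D + r$ (with $0\le q<4$, $0\le r<D$) makes the $q$-part of that exponent exactly $-s_2 q$, and the sum over $q$ contributes a clean scalar factor $\sum_{q=0}^{3}\kk^{-s_2 q}$, which is $0$ by Lemma~\ref{qnum1} whenever $s_2\not\equiv0\pmod 4$. The symmetry above gives the same conclusion for $s_1\not\equiv0\pmod4$. Hence I may assume throughout the remaining argument that $s_1\equiv s_2\equiv0\pmod4$, which is exactly what makes the harder step tractable.

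For the main case $s_0\not\equiv0\pmod4$ I would substitute $b = q\,D + r$. Each of $P,Q,P',Q'$ splits as an exact integer linear in $q$ plus a floor remainder independent of $q$. After moving the central $\kk$-power through the two $\jj$-factors, the two $\jj$-exponents merge into a single exponent whose $q$-part is $-s_2 q\equiv0$ (using $s_2\equiv0\pmod4$), while the two $\ii$-exponents merge with combined $q$-part $-s_0 q$. The point is to transport all $q$-dependence to one end of the product using $\ii\jj\ii^{-1}=-\jj$, $\ii\kk\ii^{-1}=-\kk$, $\jj\kk\jj^{-1}=-\kk$ and $\jj^{u}\ii^{v}=(-1)^{uv}\ii^{v}\jj^{u}$, each transposition costing only a scalar sign $\pm1$. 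The summand then becomes $(-1)^{\epsilon_0}(-1)^{\epsilon_1 q}\,\kk^{\rho}\jj^{\sigma}\ii^{\mu-\mu'}$ with $q$-independent $\rho,\sigma,\mu,\mu'$ and integers $\epsilon_0,\epsilon_1$ depending only on the spectator variables, so the sum over $q$ collapses to the scalar $\sum_{q=0}^{3}\ii^{(2\epsilon_1-s_0)q}$. By Lemma~\ref{qnum1} this vanishes precisely when $2\epsilon_1-s_0\not\equiv0\pmod4$, which is automatic when $s_0$ is odd.

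The hard part will be the residual subcase $s_0\equiv2\pmod4$. Here $2\epsilon_1-s_0$ is $\equiv0\pmod4$ exactly when $\epsilon_1$ is odd, so the inner sum no longer kills the term by itself, and the surviving terms (those with $\epsilon_1$ odd) must be shown to cancel once the outer summations over $a$, $c$, $d$ and the remainder $r$ are performed. This is precisely where the non-commutativity of $\ii,\jj,\kk$ obstructs the single-factor Gauss-sum argument that suffices in one dimension, and it is where I expect the real work to lie: I would analyse the parity of $\epsilon_1=c\rho+a(\rho+\sigma)$ as a function of $a$ and apply Lemma~\ref{qnum1} a second time, now to a scalar- or $\ii$-valued sum over $a$, to annihilate the leftover contributions. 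I would close by observing that this argument disposes of every shift with $s_0\not\equiv0$ or $s_3\not\equiv0$, so that only the shifts with $s_0\equiv s_3\equiv0\pmod4$ remain, to be checked directly (the bounded, computer-assisted part of the perfection claim).
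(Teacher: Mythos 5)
Your overall strategy --- split an index as $qD+r$ with $D=2^{n-1}$, peel off a factor $\sum_{q=0}^{3}(\text{unit})^{\alpha q}$, and annihilate it with Lemma~\ref{qnum1} --- is the same as the paper's, and your preliminary reductions are sound: the substitution $d=qD+r$ really does isolate the central factor $\sum_{q}\kk^{-s_2 q}$ (zero for $s_2\not\equiv 0\bmod 4$) because the two $\kk$-powers are adjacent after conjugation; the anti-automorphism $\ii\leftrightarrow\kk$, $\jj\mapsto\jj$ composed with index reversal is a genuine symmetry of $S_{a,b,c,d}$ that trades $s_0$ for $s_3$ (modulo the left/right-correlation distinction, which Kuznetsov's cited result absorbs); and the $b$-substitution does dispose of odd $s_0$. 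But, as you concede, the subcase $s_0\equiv 2\pmod 4$ with $s_1\equiv s_2\equiv 0\pmod 4$ is not proved. You reduce it to showing that the surviving terms with $\epsilon_1$ odd cancel under the outer sums, and your closing sentence about analysing the parity of $\epsilon_1=c\rho+a(\rho+\sigma)$ is a plan, not an argument: $\rho$ and $\sigma$ are themselves floor-function differences whose parities vary irregularly with $a$, $c$ and the remainders, so it is not at all clear that a second application of Lemma~\ref{qnum1} lands. As written, shifts such as $(2,0,0,0)$ are not covered, so the proof is incomplete.

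It is worth noting that the point where you stall is precisely the point the paper's own proof elides. The paper passes from the expanded correlation sum to the factorised expression (\ref{quatautolast}) with the phrase ``factoring sums of products to products of sums,'' which silently commutes powers of $\ii$, $\jj$, $\kk$ past one another; each such transposition costs a sign $(-1)^{uv}$ depending on the summation variables --- exactly the signs $(-1)^{\epsilon_0+\epsilon_1 q}$ you are tracking. In the paper's sign-free factorisation the case $s_0\equiv2\pmod4$ is killed immediately by the factor $\sum_{q_b}\ii^{-s_0 q_b}$, whereas your more careful bookkeeping shows that this step needs justification. So your write-up is the more honest of the two about where the difficulty lies, but to constitute a proof it must either justify the commutations behind (\ref{quatautolast}) or complete the $s_0\equiv 2\pmod 4$ analysis explicitly.
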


\begin{proof}
The autocorrelation of \textbf{S} for shift $s_0, s_1, s_2, s_3$ is given by: 
\begin{align}
\theta_{\textbf{S}}&\left(s_0,s_1,s_2,s_3 \right)\nonumber\\ 
    &=\sum_{a=0}^{2^{n+1}-1}\sum_{b=0}^{2^{n+1}-1}\sum_{c=0}^{2^{n+1}-1}\sum_{d=0}^{2^{n+1}-1}
    S_{a,b,c,d}\,S_{a+s_0,b+s_1,c+s_2,d+s_3}^*\nonumber\\
&= \sum_{a=0}^{2^{n+1}-1}\sum_{b=0}^{2^{n+1}-1}\sum_{c=0}^{2^{n+1}-1}\sum_{d=0}^{2^{n+1}-1}
    \ii^{\left\lfloor\frac{a b}{2^{n-1}}\right\rfloor} \jj^{\left\lfloor\frac{b c}{2^{n-1}}\right\rfloor} 
    \kk^{\left\lfloor\frac{c d}{2^{n-1}}\right\rfloor} \,
    \left(\ii^{\left\lfloor\frac{(a+s_0) (b+s_1)}{2^{n-1}}\right\rfloor} 
    \jj^{\left\lfloor\frac{(b+s_1) (c+s_2)}{2^{n-1}}\right\rfloor} 
    \kk^{\left\lfloor\frac{(c+s_2) (d+s_3)}{2^{n-1}}\right\rfloor}\right)^*\nonumber\\
\intertext{\indent Over the quaternions, we have, $(a b)^* = b^* a^*$:}
&= \sum_{a=0}^{2^{n+1}-1}\sum_{b=0}^{2^{n+1}-1}\sum_{c=0}^{2^{n+1}-1}\sum_{d=0}^{2^{n+1}-1}
    \ii^{\left\lfloor\frac{a b}{2^{n-1}}\right\rfloor} 
    \jj^{\left\lfloor\frac{b c}{2^{n-1}}\right\rfloor}
    \kk^{\left\lfloor\frac{c d}{2^{n-1}}\right\rfloor} \,
    {\kk^{\left\lfloor\frac{(c+s_2) (d+s_3)}{2^{n-1}}\right\rfloor}}^*    
    {\jj^{\left\lfloor\frac{(b+s_1) (c+s_2)}{2^{n-1}}\right\rfloor}}^*
    {\ii^{\left\lfloor\frac{(a+s_0) (b+s_1)}{2^{n-1}}\right\rfloor}}^*\label{quatproof0}
\end{align}

Change coordinates by letting $a = q_a 2^{n-1} + r_a$, 
$b = q_b 2^{n-1} + r_b$, $c = q_c 2^{n-1} + r_c$, and $d = q_d 2^{n-1} + r_d$. Then
(\ref{quatproof0}) becomes: 
$$\theta_{\textbf{S}}\left(s_0,s_1,s_2,s_3 \right) = \sum_{q_a=0}^3\sum_{r_a=0}^{2^{n-1}-1}
\sum_{q_b=0}^3\sum_{r_b=0}^{2^{n-1}-1}
\sum_{q_c=0}^3\sum_{r_c=0}^{2^{n-1}-1}
\sum_{q_d=0}^3\sum_{r_d=0}^{2^{n-1}-1}
\ii^{\left\lfloor\frac{(q_a2^{n-1}+r_a)(q_b2^{n-1}+r_b)}{2^{n-1}}\right\rfloor}
\times$$
$$\jj^{\left\lfloor\frac{(q_b2^{n-1}+r_b)(q_c2^{n-1}+r_c)}{2^{n-1}}\right\rfloor}
\kk^{\left\lfloor\frac{(q_c2^{n-1}+r_c)(q_d2^{n-1}+r_d)}{2^{n-1}}\right\rfloor}
{\kk^{\left\lfloor\frac{(q_c2^{n-1}+r_c+s_2)(q_d2^{n-1}+r_d+s_3)}{2^{n-1}}\right\rfloor}}^*
\times$$
$${\jj^{\left\lfloor\frac{(q_b2^{n-1}+r_b+s_1)(q_c2^{n-1}+r_c+s_2)}{2^{n-1}}\right\rfloor}}^*
{\ii^{\left\lfloor\frac{(q_a2^{n-1}+r_a+s_0)(q_b2^{n-1}+r_b+s_1)}{2^{n-1}}\right\rfloor}}^*
$$
Expanding out the products, simplifying the {\it floor} functions, and
factoring sums of products to products of sums yields: 
$$\theta_{\textbf{S}}\left(s_0,s_1,s_2,s_3 \right) = 
\left(\sum_{q_a=0}^3 \ii^{-s_1q_a}\right)
\left(\sum_{q_b=0}^3 \ii^{-s_0q_b} \jj^{-s_2q_b}\right)
\left(\sum_{q_c=0}^3 \jj^{-s_1q_c} \kk^{-s_3q_c} \right)
\left(\sum_{q_d=0}^3 \kk^{-s_2q_d}\right)\times$$
$$\left(
\sum_{r_a=0}^{2^{n-1}-1}
\sum_{r_b=0}^{2^{n-1}-1}
\sum_{r_c=0}^{2^{n-1}-1}
\sum_{r_d=0}^{2^{n-1}-1}
\ii^{\left\lfloor\frac{r_ar_b}{2^{n-1}}\right\rfloor - 
\left\lfloor\frac{r_ar_b+s_0r_b+s_1r_a+s_0s_1}{2^{n-1}}\right\rfloor}
\times\right.$$
\begin{equation}
\left.\jj^{\left\lfloor\frac{r_br_c}{2^{n-1}}\right\rfloor - 
\left\lfloor\frac{r_br_c+s_1r_c+s_2r_b+s_1s_2}{2^{n-1}}\right\rfloor}
\kk^{\left\lfloor\frac{r_cr_d}{2^{n-1}}\right\rfloor - 
\left\lfloor\frac{r_cr_d+s_2r_d+s_3r_c+s_2s_3}{2^{n-1}}\right\rfloor}
\right)\label{quatautolast}
\end{equation}
By Lemma \ref{qnum1}, the first summation in (\ref{quatautolast}) 
is zero for $s_1 \neq 0 \mod 4$, similarly the fourth summation 
in (\ref{quatautolast}) is zero for $s_2 \neq 0 \mod 4$. Otherwise, 
for $s_1,s_2 = 0 \mod 4$, the second and third summations in 
(\ref{quatautolast}) are zero for $s_0, s_3 \neq 0 \mod 4$. 
\end{proof}

\begin{remark}
For $0<n<6$, Construction \Roman{quaternionC1}
generates perfect arrays. 
\end{remark}

We have confirmed by a computer program that the fifth summation in 
(\ref{quatautolast}) is zero for $0<n<6$ and for all shifts, and 
non-zero for many shifts for $n=6$. 

\begin{example}
We generate the smallest perfect array from 
Construction \Roman{quaternionC1} of size 
$4 \times 4 \times 4 \times 4$: 
{\small $$
\left[
\begin{array}{cccc}
 \left[
\begin{array}{cccc}
 1 & 1 & 1 & 1 \\
 1 & \kk  & -1 & -\kk  \\
 1 & -1 & 1 & -1 \\
 1 & -\kk  & -1 & \kk 
\end{array}
\right] & \left[
\begin{array}{cccc}
 1 & 1 & 1 & 1 \\
 \jj & \ii& -\jj & -\ii\\
 -1 & 1 & -1 & 1 \\
 -\jj & \ii& \jj & -\ii
\end{array}
\right] & \left[
\begin{array}{cccc}
 1 & 1 & 1 & 1 \\
 -1 & -\kk  & 1 & \kk  \\
 1 & -1 & 1 & -1 \\
 -1 & \kk  & 1 & -\kk 
\end{array}
\right] & \left[
\begin{array}{cccc}
 1 & 1 & 1 & 1 \\
 -\jj & -\ii& \jj & \ii\\
 -1 & 1 & -1 & 1 \\
 \jj & -\ii& -\jj & \ii
\end{array}
\right] \\
 \left[
\begin{array}{cccc}
 1 & 1 & 1 & 1 \\
 1 & \kk  & -1 & -\kk  \\
 1 & -1 & 1 & -1 \\
 1 & -\kk  & -1 & \kk 
\end{array}
\right] & \left[
\begin{array}{cccc}
 \ii& \ii& \ii& \ii\\
 \kk  & -1 & -\kk  & 1 \\
 -\ii& \ii& -\ii& \ii\\
 -\kk  & -1 & \kk  & 1
\end{array}
\right] & \left[
\begin{array}{cccc}
 -1 & -1 & -1 & -1 \\
 1 & \kk  & -1 & -\kk  \\
 -1 & 1 & -1 & 1 \\
 1 & -\kk  & -1 & \kk 
\end{array}
\right] & \left[
\begin{array}{cccc}
 -\ii& -\ii& -\ii& -\ii\\
 \kk  & -1 & -\kk  & 1 \\
 \ii& -\ii& \ii& -\ii\\
 -\kk  & -1 & \kk  & 1
\end{array}
\right] \\
 \left[
\begin{array}{cccc}
 1 & 1 & 1 & 1 \\
 1 & \kk  & -1 & -\kk  \\
 1 & -1 & 1 & -1 \\
 1 & -\kk  & -1 & \kk 
\end{array}
\right] & \left[
\begin{array}{cccc}
 -1 & -1 & -1 & -1 \\
 -\jj & -\ii& \jj & \ii\\
 1 & -1 & 1 & -1 \\
 \jj & -\ii& -\jj & \ii
\end{array}
\right] & \left[
\begin{array}{cccc}
 1 & 1 & 1 & 1 \\
 -1 & -\kk  & 1 & \kk  \\
 1 & -1 & 1 & -1 \\
 -1 & \kk  & 1 & -\kk 
\end{array}
\right] & \left[
\begin{array}{cccc}
 -1 & -1 & -1 & -1 \\
 \jj & \ii& -\jj & -\ii\\
 1 & -1 & 1 & -1 \\
 -\jj & \ii& \jj & -\ii
\end{array}
\right] \\
 \left[
\begin{array}{cccc}
 1 & 1 & 1 & 1 \\
 1 & \kk  & -1 & -\kk  \\
 1 & -1 & 1 & -1 \\
 1 & -\kk  & -1 & \kk 
\end{array}
\right] & \left[
\begin{array}{cccc}
 -\ii& -\ii& -\ii& -\ii\\
 -\kk  & 1 & \kk  & -1 \\
 \ii& -\ii& \ii& -\ii\\
 \kk  & 1 & -\kk  & -1
\end{array}
\right] & \left[
\begin{array}{cccc}
 -1 & -1 & -1 & -1 \\
 1 & \kk  & -1 & -\kk  \\
 -1 & 1 & -1 & 1 \\
 1 & -\kk  & -1 & \kk 
\end{array}
\right] & \left[
\begin{array}{cccc}
 \ii& \ii& \ii& \ii\\
 -\kk  & 1 & \kk  & -1 \\
 -\ii& \ii& -\ii& \ii\\
 \kk  & 1 & -\kk  & -1
\end{array}
\right]
\end{array}
\right]$$}
\end{example}

%
%
%    QUATERNION ARRAY CONSTRUCTION 2
%
%

\section{Arrays of size $2^{n}\times 2^{n} \times 2^{n+1} \times 2^{n+1}$, for $0 < n < 6$}

We now state a second construction for $4$-dimensional arrays over the 
unit quaternions. The construction is very similar to Construction \Roman{quaternionC1}.

\bigskip

\const{\constlabel{quaternionC0}\hypertarget{quaternion0}
Let $n>0$ such that $n = 0 \mod 4$, we construct a 4-dimensional array, 
$\textbf{S} = \left[S_{a,b,c,d} \right]$ of size $2^{n}\times 2^{n} \times 2^{n+1} \times 2^{n+1}$
over the unit quaternions: $\{-1,1, -\ii, \ii, -\jj, \jj, -\kk, \kk\}$, where 
$$S_{a,b,c,d} = \ii^{\left\lfloor\frac{a b}{2^{n-1}}\right\rfloor} \jj^{\left\lfloor\frac{b c}{2^{n-1}}\right\rfloor} \kk^{\left\lfloor\frac{c d}{2^{n-1}}\right\rfloor},$$ for $0 \leq a,b,c,d < 2^{n+1}$.}

\bigskip

\begin{remark}
For $0<n<6$, we have confirmed by a computer program that 
Construction \Roman{quaternionC0} generates perfect arrays.
\end{remark}

\bigskip

%
%    CONCLUSION
%

In this paper we have found, by exhaustive computer search, perfect sequences over the simple unit 
quaternions of lengths 10, 14, 18, 26, 30, 38, 42, 50, 54, 62, 74, 82, 90, and 98. These sequences were subsequently 
generalised to unbounded lengths via the Lee sequences. We have shown that perfect quaternion sequences with the 
AOP exist, and subsequently perfect sequences and arrays exist over the simple unit quaternions. We have conjectured that the Frank
bound of the square of the number of elements in the alphabet of a perfect sequence extends to simple unit quaternions 
(Conjecture \ref{quaternion_AOP_bound_conjecture}). We have discovered constructions for perfect 2 and 4-dimensional 
arrays over the simple unit quaternions (Construction \Roman{quatseq1}, 
Construction \Roman{quaternionC1}, and Construction \Roman{quaternionC0}).

\bibliographystyle{abbrv}

\end{document}